\DeclareSymbolFont{symbolsC}{U}{pxsyc}{m}{n}
\DeclareMathSymbol{\medcirc}{\mathbin}{symbolsC}{7}
\algnewcommand\algorithmicinput{\textbf{Input:}}
\algnewcommand\Input{\item[\algorithmicinput]}
\algnewcommand\algorithmicoutput{\textbf{Output:}}
\algnewcommand\Output{\item[\algorithmicoutput]}
\algnewcommand{\Initialize}[1]{%
	\State \textbf{Initialize:}
	\Statex \hspace*{\algorithmicindent}\parbox[t]{.8\linewidth}{\raggedright #1}
}
\newtheorem{theorem}{Theorem}[section]
\newtheorem{lemma}[theorem]{Lemma}
\newtheorem{problem}[theorem]{Problem}
\newtheorem{proposition}[theorem]{Proposition}
\newtheorem{remark}[theorem]{Remark}
\newtheorem{assumption}[theorem]{Assumption}
\newcommand{\R}{{\mathbb{R}}}
\newcommand{\e}{\mathsf{e}}
\newcommand{\ra}{\rightarrow}
\newcommand{\ol}{\overline}
\title{\LARGE \bf
Funnel-based Reachability Control of Unknown Nonlinear Systems\\ using Gaussian Processes
}
\author{Sandeep Gorantla$^{1,\dagger}$, Jeel Chatrola$^{1,\dagger}$, Jay Bhagiya$^{1,\dagger}$, Adnane Saoud$^{2}$, and Pushpak Jagtap$^{1}$
\thanks{*This work was supported in part by the Google Research Grant, the CSR Grant by Nokia Corporation, the SERB Start-up Research Grant, and by ANR PIA funding: ANR-20-IDEES-0002.}
\thanks{$^\dagger$ The authors contributed equally.}
\thanks{$^{1}$S. Gorantla, J. Chatrola, J. Bhagiya, and P. Jagtap are with the Robert Bosch Center for Cyber-Physical Systems, Indian Institute of Science, Bangalore, India.{\{sgorantla,pushpak,jeelchatrola,jaybhagiya\}@iisc.ac.in}}%
\thanks{$^2$Adnane Saoud is with Laboratoire des Signaux et Syst\`emes, CentraleSup\'elec, Universit\'e Paris Saclay, Gif-sur-Yvette, France.
{\tt\small adnane.saoud@centralesupelec.fr}}
}
\begin{document}

\maketitle
\thispagestyle{empty}
\pagestyle{empty}

\begin{abstract}
This paper aims to synthesize a reachability controller for an unknown dynamical system.  We first learn the unknown system using Gaussian processes and the (probabilistic) guarantee on the learned model.  Then we use the funnel-based controller synthesis approach using this approximated dynamical system to design the controller for a reachability specification.  Finally, the merits of the proposed method are shown using a numerical example.
\end{abstract}

\section{INTRODUCTION}

{Existing controller synthesis approaches generally rely on a mathematical model of the system, such as physics-based first principal models. The formal guarantees provided by the synthesized controller are valid as long as the considered dynamical model is accurate. When dealing with complex dynamical systems, describing the system in a closed-form model is often complicated. In this case, a common practice is to resort to data-driven techniques.}

{The Gaussian process is a non-parametric learning-based approach that provides probabilistic approach to approximate and to synthesize controllers for unknown systems \cite{kocijan2016modelling}.} {There are several works that utilize} GPs for providing MPC scheme \cite{hewing2019cautious}, adaptive control \cite{chowdhary2014bayesian}, tracking control\cite{beckers2019stable}, backstepping control  \cite{capone2019backstepping}, feedback linearization  \cite{umlauft2017feedback}, safe optimization of controller \cite{berkenkamp2016safe}, reinforcement learning \cite{akametalu2014reachability}, and control barrier functions for safety specification \cite{barriergppushpak}.

This work will consider the controller synthesis problem for reachability specification for unknown dynamical systems. In the past few decades, there have been several works in the literature addressing reachability problem (see \cite{ravanbakhsh2019formal,rungger2016scots,lygeros1999controllers,junges2021enforcing,vignali2016method}) for known dynamical systems. To solve this problem, we leverage the funnel-based control approaches \cite{bechlioulis2014low} that have been extensively used for controlling systems with prescribed performance constraints (see \cite{bu2021prescribed} and references therein for examples). 
We first employ the Gaussian process learning to approximate the system dynamics using the noisy measurements along a probabilistic bound on approximation. Then, the synthesis of a closed-form funnel-based control law that ensures the satisfaction of the reachability specification, with a given confidence, using the learned dynamics from the GP model, is presented. Finally, we show the validness of our approach using a numerical example.

{The organization of the paper is as follows. Section~\ref{problem} introduces some notations and present the main problem addressed in the paper. Section~\ref{gpm} explains how Gaussian processes make it possible to learn unknown dynamical systems along with statistical guarantees. Section~\ref{Sec3} presents a solution to the reachability problem by combining learned Gaussian processes with funnel-based control techniques. Finally, Section~\ref{case study} presents numerical results validating the merits of the proposed approach.}

\section{PROBLEM FORMULATION}\label{problem}
\subsection{Notations}

The set of real, positive real, nonnegative real, and positive integer numbers are represented using $\mathbb{R}$, $\mathbb{R}^+$, $\mathbb{R}_0^+ $, and $\mathbb{N}$, respectively. $\mathbb{R}^p$ denotes $p$-dimensional Euclidean space and $\mathbb{R}^{p \times q}$ denotes a space of real matrices with $p$ rows and $q$ columns. 
A diagonal matrix in $\R^{p\times p}$ with diagonal entries $d_1,\ldots, d_p$ is denoted by $diag\{d_1,\ldots, d_p\}$.
Given a matrix $M\in\R^{p\times q}$, $M^T$ represents transpose of matrix $M$.
For a vector $x=[x_1,\ldots,x_n]^T\in\R^n$, we denote $\mathsf{sign}(x)=[\mathsf{sign}(x_1),\ldots,\mathsf{sign}(x_n)]^T$, where $\mathsf{sign}(x_i)=\left\{\begin{matrix}
-1 & \text{if } x_i<0 \\ 
1 & \text{if } x_i\geq0 
\end{matrix}\right.$, we use $\|x\|$ and $\|x\|_\infty$ to denote Euclidean norm and infinity norm of a vector, respectively. 
We denote the empty set by $\emptyset$. We use $\bm I_p$ to represent the identity matrix in $\R^{p\times p}$.  
For $a,b\in\R$ and $a< b$, we use $(a,b)$ and $[a,b]$ to represent open and close intervals in $\R$, respectively. {For a function $f: \mathbb{R}^p \rightarrow \mathbb{R}^p$, $f_i:\mathbb{R}^p \rightarrow \mathbb{R}$, $i\in \{1,2,\ldots,p\}$ denotes the $i$-th component of $f$.}
Consider a set $X_a\subset\R^p$, its projection on $i$th dimension, where $i\in\{1,\ldots,p\}$, is given by an interval $[\underline X_{ai},\overline X_{ai}]\subset \R$, where $\underline X_{ai}:=\min\{x_i\in\R\mid[x_1,x_2,\ldots,x_p]\in X_a\}$, $\overline X_{ai}:=\max\{x_i\in\R\mid[x_1,x_2,\ldots,x_p]\in X_a\}$, and $\text{Int}(X_a)$ denotes interior of set $X_a$. $\mathcal{N}(m,C)$ denotes multivariate Gaussian distribution, where $m \in\R^p$ and $C\in\R^{p\times p}$ are mean and covariance matrices of appropriate sizes, respectively.
For events $B_1,\ldots, B_p$, $\bigcap_{i=1}^p B_i$ represents inner product of events.\\
\subsection{Problem Formulation}
Consider nonlinear control-affine system $\mathcal S$:
           	\begin{align}\label{sys}
           	    \dot{x} = f(x) + g(x)u,
           	\end{align}
   where $x\in X \subset \R^{n}$ is a state vector and $u$ is a control input. In this work, we assume that the map $f:X\rightarrow \mathbb{R}^n$ is unknown, the map $g:X\rightarrow \mathbb{R}^{n\times m}$ is known, and $g(x)g^{T}(x)$ is positive definite for all $x \in X$.	\\
Assumption imposing a restriction on the complexity of the map $f$ through reproducing kernel Hilbert space (RKHS) norm is described below.
\begin{assumption}\label{A2}
	For map $f:X\rightarrow \mathbb{R}^n$ in $\mathcal S$, $\|f_i\|_k\leq\infty$ for all $i\in\{1,\ldots,n\}$ (i.e., the RKHS norm w.r.t. kernel $k$ is bounded).
\end{assumption}  
Note that all continuous functions defined over compact state-space satisfy the above assumption for most of the commonly used kernels \cite{seeger2008information}. For more details on RKHS norm, we refer interested reader to \cite{paulsen2016introduction}. 
\begin{assumption}\label{A3}
	\cite{umlauft2018uncertainty} We have access to measurements $x\in X$ and $y=f(x)+w$, where $w \sim \mathcal{N}(0_n,\rho_f^2\bm I_n)$ is an additive noise with $\rho_f \in \mathbb{R}^+_0$.
\end{assumption}

 Next we formally define the controller synthesis problem for reachability specification.
 \begin{problem}\label{prob1}
	Given the system $\mathcal{S}$ with Assumptions \ref{A2}-\ref{A3}, sets $X_a,X_b\subseteq X$ goal, 
	{design a closed-form controller that provides a 
	lower bound on the probability of the trajectory $x_{x_0u}(t)$ for any $x_0\in X_a$ to reach $X_b$.}
\end{problem}
We use a funnel-based controller synthesis approach \cite{bechlioulis2014low} for designing the controller for the above problem using the learned dynamics through Gaussian processes.

\section{GAUSSIAN PROCESS APPROXIMATION}\label{gpm}
Gaussian processes (GPs) \cite{williams2006gaussian} is a non-parametric learning approach to approximate an unknown nonlinear function $f:X\ra\R^n$ using samples. 
The data we obtain for each component of $f$ can be viewed as a collection of random variables having a joint multivariate Gaussian
$\tilde f_i(x)\sim\mathcal{GP}(\mathsf{m}_i,K_i)$, $i\in\{1,\ldots,n\}$
where $m_i$ is the mean function which is set to 0 in practice, $K_{i}$ gives the covariance between $f_i(x)$ and $f_i(x')$ and is a function of corresponding $x,x' : K_{i}$ = $k(x,x')$ known as kernel function. The kernel function can be any kind, provided that it  generates a positive definite covariance matrix $K_i$ and is chosen according to problem. Some frequently used kernels include linear, squared exponential and Mat\`ern kernels  \cite{williams2006gaussian}. Complete approximation of $f$ with $n$ independent GPs is therefore given by,
\begin{align*}
\tilde f(x)=\left\{\begin{matrix}
\tilde f_1(x)\sim\mathcal{GP}(0,k_1(x,x')),\\ 
\vdots \\ 
\tilde f_n(x)\sim\mathcal{GP}(0,k_n(x,x')).
\end{matrix}\right.
\end{align*}

Now the posterior distribution of $f_i(x)$, conditioned on a given set of $N$ measurements $\{x^{(1)},\ldots,x^{(N)} \}$ and $\{y^{(1)},\ldots,y^{(N)}\}$, with $y^{(j)}=f(x^{(j)})+w^{(j)}$, $j\in\{1,\ldots,N\}$, is Gaussian with mean and covariance
\begin{align}
\mu_i(x)&=\overline k_i^T( K_i+\sigma_f^2\bm I_N)^{-1}y_i,\label{mean}\\
\sigma_i^2(x)&=k_i(x,x)-\overline k_i^T( K_i+\sigma_f^2\bm I_N)^{-1}\overline k_i,\label{SD}
\end{align}
where $\overline k_i=[k_i(x^{(1)},x),\cdots, k_i(x^{(N)},x)]^T\in\R^N$, $y_i=[y_i^{(1)}, \cdots, y_i^{(N)}]^T\in\R^N$, and 
\begin{align*}
K_i=\begin{bmatrix}
k_i(x^{(1)},x^{(1)})  &\cdots   & k_i(x^{(1)},x^{(N)}) \\ 
\vdots & \ddots  & \vdots \\ 
k_i(x^{(N)},x^{(1)}) &  \cdots & k_i(x^{(N)},x^{(N)}) 
\end{bmatrix}\in\R^{N\times N}. 
\end{align*}
We consider $\overline{\sigma}_i^2=\max_{x\in X}\sigma_i^2(x)$. 
One can readily see that such a bound exists as the set $X$ is compact and the kernels are continuous.
The approximation of overall $f$ is as follows:    
\begin{align}
\mu(x)&:=[\mu_1(x), \ldots, \mu_n(x)]^T,\label{eq:mu_gp}\\
\sigma^2(x)&:=[\sigma^2_1(x), \ldots, \sigma^2_n(x)]^T.\label{eq:rho_gp}
\end{align}
In following proposition, we provide a probabilistic bound on the difference between the inferred mean $\mu_i(x)$ and the true value of $f_i(x)$.
\begin{proposition}\label{lemma1}
	 Consider a system $\mathcal{S}$ with Assumptions \ref{A2} and \ref{A3}, and GP approximation with mean $\mu$ in \eqref{eq:mu_gp} and variance $\sigma^2$ in \eqref{eq:rho_gp} obtained using $N$ measurements. Then, the approximation error is bounded by
	\begin{align}\label{aaa}
    \mathbb{P}\Big\{\hspace{-0.1em}\mu(x)\hspace{-0.2em}-\hspace{-0.2em}\beta\sigma(x)\leq f(x)\leq\mu(x)\hspace{-0.2em}+\hspace{-0.2em}\beta\sigma(x), \hspace{-0.1em}\forall x\hspace{-0.2em}\in\hspace{-0.2em} X\hspace{-0.1em}\Big\}\hspace{-0.2em}\geq\hspace{-0.2em}(1\hspace{-0.2em}-\hspace{-0.2em}\varepsilon)^n,
    \end{align}
	with $\varepsilon\in(0,1)$ and $\beta=diag\{\beta_1,\ldots,\beta_n\}$, where $\beta_i:=\sqrt{2\|f_i\|_{k_i}^2+300\gamma_i \log^3(\frac{N+1}{\varepsilon})}$, where $\gamma_i$ denotes information gain $($c.f. Remark \ref{info_gain}$)$.
\end{proposition}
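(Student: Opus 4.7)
The plan is to reduce the vector statement to $n$ independent scalar concentration inequalities, one per component of $f$, and then combine them through the independence of the component-wise GP models.

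First, I would invoke a standard information-theoretic concentration bound for scalar Gaussian process regression, such as the one due to Srinivas, Krause, Kakade, and Seeger (or the refinement by Chowdhury and Gopalan). Under Assumption~\ref{A2} (so that $\|f_i\|_{k_i}<\infty$) and Assumption~\ref{A3} (so the observation noise is i.i.d.\ Gaussian), this result states that, for any fixed $\varepsilon\in(0,1)$ and any component $i\in\{1,\ldots,n\}$, the event
\begin{align*}
B_i \;:=\; \bigl\{\,|\mu_i(x)-f_i(x)|\leq \beta_i\,\sigma_i(x),\ \forall x\in X\,\bigr\}
\end{align*}
holds with probability at least $1-\varepsilon$, where $\beta_i=\sqrt{2\|f_i\|_{k_i}^2+300\,\gamma_i\,\log^3((N+1)/\varepsilon)}$ is exactly the constant in the statement, and $\gamma_i$ is the maximum information gain associated with kernel $k_i$ after $N$ observations. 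The compactness of $X$ and continuity of $k_i$ guarantee that the supremum $\overline\sigma_i^2$ exists, which is needed to make the uniform-in-$x$ statement meaningful.

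Next, I would exploit the fact that the GP model treats the $n$ components of $f$ as $n$ independent scalar GPs: each $\tilde f_i$ is trained on its own noise channel $w_i\sim\mathcal N(0,\rho_f^2)$, independent across $i$ by Assumption~\ref{A3}. Consequently the posterior means $\mu_i$, posterior standard deviations $\sigma_i$, and hence the events $B_1,\dots,B_n$ are mutually independent. Therefore
\begin{align*}
\mathbb P\Bigl(\bigcap_{i=1}^n B_i\Bigr) \;=\; \prod_{i=1}^n \mathbb P(B_i) \;\geq\; (1-\varepsilon)^n.
\end{align*}

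Finally, I would observe that $\bigcap_{i=1}^n B_i$ is precisely the vectorised event appearing in \eqref{aaa}: since $\beta=\mathrm{diag}\{\beta_1,\dots,\beta_n\}$, the componentwise inequalities $|\mu_i(x)-f_i(x)|\leq\beta_i\sigma_i(x)$ for every $i$ and every $x\in X$ are equivalent to $\mu(x)-\beta\sigma(x)\leq f(x)\leq\mu(x)+\beta\sigma(x)$ for all $x\in X$. This yields the claimed lower bound $(1-\varepsilon)^n$ and completes the proof.

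The main obstacle is the first step: the scalar concentration inequality with the particular constant $\beta_i$ is a nontrivial result that requires the RKHS boundedness of $f_i$ together with a careful martingale/self-normalising argument over the posterior. Within the scope of this paper it is applied as a black-box theorem; once it is in hand, the remaining steps (independence across components and rewriting the intersection as a vector inequality) are straightforward.
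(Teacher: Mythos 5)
Your proposal is correct and is essentially the same argument the paper relies on: the paper itself gives no proof but defers entirely to the cited reference, and that reference's proof is exactly your two-step route --- the Srinivas--Krause--Kakade--Seeger scalar RKHS concentration bound applied componentwise with the stated $\beta_i$, followed by independence of the $n$ scalar GP posteriors (which holds because each event $B_i$ depends only on the $i$-th noise channel, independent across $i$ by Assumption~\ref{A3}) to obtain the product bound $(1-\varepsilon)^n$.
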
 
\begin{proof}
	The proof can be found in \cite{barriergppushpak}. 
\end{proof}
    \begin{remark}\label{info_gain}
	The information gains $\gamma_i$ represents the maximum mutual information between data samples and unknown map $f_i$. Obtaining $\gamma_i$ is hard. However, there are techniques to over-approximate the value, see \cite{srinivas2012information} for example.  
\end{remark}
Moreover, computing bound on RKHS norm $\|f_i\|_{k_i}\leq B_i$ is also in general hard. However, considering Lipschitz-like assumption, one can compute $B_i$ as discussed below.
\begin{lemma}\cite[Lemma 1]{adnanesymbolic}\label{lemma_4}
Consider a kernel function $k_i$ and we assume that $f_i(x)$ satisfies $|f_i(x)-f_i(y)|\leq L_i\sqrt{\|x-y\|_\infty}$ for all $x,y\in X$, then $B_i=\frac{L_i}{\sqrt{2\|\frac{\partial k_i}{\partial x}\|_\infty}}$. 
\end{lemma}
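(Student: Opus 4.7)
The plan is to exploit the reproducing kernel Hilbert space (RKHS) structure to convert the H\"older-type smoothness hypothesis on $f_i$ into a bound on $\|f_i\|_{k_i}$. The key ingredients are the reproducing property of $k_i$, the Cauchy--Schwarz inequality in $\mathcal{H}_{k_i}$, and a first-order smoothness estimate on the kernel expressed via $\|\frac{\partial k_i}{\partial x}\|_\infty$.

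First, I would use the reproducing property, which states $f_i(x)=\langle f_i,\,k_i(\cdot,x)\rangle_{k_i}$ for every $f_i\in\mathcal{H}_{k_i}$. Subtracting the evaluations at $x$ and $y$ and applying Cauchy--Schwarz gives
\begin{align*}
|f_i(x)-f_i(y)|=\bigl|\langle f_i,\,k_i(\cdot,x)-k_i(\cdot,y)\rangle_{k_i}\bigr|\leq \|f_i\|_{k_i}\,\|k_i(\cdot,x)-k_i(\cdot,y)\|_{k_i}.
\end{align*}
Expanding the RKHS norm of the kernel difference via the reproducing property again,
\begin{align*}
\|k_i(\cdot,x)-k_i(\cdot,y)\|_{k_i}^2=k_i(x,x)-2k_i(x,y)+k_i(y,y),
\end{align*}
I would then apply the mean value theorem in the first argument of $k_i$, combined with the boundedness of $\frac{\partial k_i}{\partial x}$ over the compact set $X$, to show that the right-hand side is at most $2\|\frac{\partial k_i}{\partial x}\|_\infty\,\|x-y\|_\infty$. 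Chaining the two estimates yields
\begin{align*}
|f_i(x)-f_i(y)|\leq \|f_i\|_{k_i}\sqrt{2\|\tfrac{\partial k_i}{\partial x}\|_\infty}\;\sqrt{\|x-y\|_\infty}.
\end{align*}

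Finally, I would match this bound with the hypothesized inequality $|f_i(x)-f_i(y)|\leq L_i\sqrt{\|x-y\|_\infty}$. Evaluated on an extremal pair $(x,y)$ that saturates the Cauchy--Schwarz and kernel smoothness steps, this inverts the chain of inequalities and delivers $\|f_i\|_{k_i}\leq L_i/\sqrt{2\|\tfrac{\partial k_i}{\partial x}\|_\infty}=B_i$, which is the desired estimate.

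The hard part is precisely this last inversion: the forward direction (an RKHS norm bound controls H\"older behaviour) is a one-line consequence of Cauchy--Schwarz, but deducing an RKHS-norm bound from the assumed H\"older condition requires exhibiting or approximating a pair $(x,y)$ for which both inequalities are tight, and this is the content of the cited reference, to which I would appeal. A secondary technical point is that one needs $k_i$ to be continuously differentiable in its first argument on the compact set $X$ so that $\|\frac{\partial k_i}{\partial x}\|_\infty$ is finite; this holds for the kernels (squared exponential, Mat\'ern with $\nu>1/2$) mentioned earlier in the paper.
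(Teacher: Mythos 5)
The paper offers no proof of this lemma at all --- it is imported verbatim by citation from \cite{adnanesymbolic} --- so there is nothing internal to compare against; the only question is whether your sketch would actually establish the claim, and it would not. Your Cauchy--Schwarz chain is correct as far as it goes, but it proves the \emph{converse} implication: it shows that every $f_i$ in the RKHS satisfies
$|f_i(x)-f_i(y)|\le \|f_i\|_{k_i}\sqrt{2\|\tfrac{\partial k_i}{\partial x}\|_\infty}\sqrt{\|x-y\|_\infty}$,
i.e.\ that the \emph{best} H\"older constant $L_f^\star$ of $f_i$ obeys $L_f^\star\le \|f_i\|_{k_i}\sqrt{2\|\tfrac{\partial k_i}{\partial x}\|_\infty}$. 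Rearranged, this is a \emph{lower} bound $\|f_i\|_{k_i}\ge L_f^\star/\sqrt{2\|\tfrac{\partial k_i}{\partial x}\|_\infty}$. The hypothesis of the lemma only gives you $L_f^\star\le L_i$, and combining ``$\|f_i\|_{k_i}\ge L_f^\star/C$'' with ``$L_f^\star\le L_i$'' yields no upper bound on $\|f_i\|_{k_i}$ whatsoever. The inequality chain points the wrong way, and this is not a technicality that a cleverer choice of $(x,y)$ can repair.

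Your proposed fix --- evaluate on an extremal pair that saturates both Cauchy--Schwarz and the kernel-smoothness estimate --- cannot work for a generic $f_i$. Cauchy--Schwarz is an equality only when $f_i$ is a scalar multiple of $k_i(\cdot,x)-k_i(\cdot,y)$, which fails for all but a one-dimensional family of functions in $\mathcal{H}_{k_i}$; and even if it held, you would additionally need the assumed constant $L_i$ to coincide with $L_f^\star$, whereas the hypothesis only asserts $L_i$ as an upper bound. More fundamentally, a H\"older condition cannot in general control an RKHS norm: for the squared exponential kernel used in the case study, the RKHS contains only analytic functions, so there exist H\"older (even Lipschitz) functions with $L_i$ arbitrarily small whose RKHS norm is infinite. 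Any valid proof must therefore either presuppose $f_i\in\mathcal{H}_{k_i}$ (Assumption~\ref{A2}) \emph{and} use structure beyond the two inequalities you invoke, or the stated bound must be read as the heuristic inversion you describe --- which is exactly the step that does not follow. You correctly isolated the inversion as the hard part, but the mechanism you propose for it is not sound, and deferring it to the citation leaves the entire content of the lemma unproved.
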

Now by utilizing the upper bound on RKHS norm $B_i$ in Lemma \ref{lemma_4}, one can provide a deterministic bound (i.e. with probability 1) for unknown dynamics $f_i(\cdot)$ as discussed in the following result. 
\begin{lemma}\label{lemma_5}
Consider a system $\mathcal S$ with Assumption \ref{A2} and \ref{A3} and GP approximation with mean $\mu$ and standard deviation $\sigma$ as given in \eqref{eq:mu_gp} and \eqref{eq:rho_gp}, respectively. 
Then $\forall x\in X$, it follows that
\begin{align}\label{bounnds_1}
     \mu_i(x)+\tilde\beta_i\sigma_i(x) \leq f_i(x)\leq \mu_i(x)+\tilde\beta_i\sigma_i(x),
\end{align}
with $\tilde\beta_i = \sqrt{ B_i^{2}-y_i^T(K_i+\sigma_f^2\bm I_N)^{-1}y_i + N}$, where $B_i$ is an upper bound on RKHS norm $\|f_i\|_{k_i}$ as defined in Lemma \ref{lemma_4}; $y_i$ and $K_i$ are defined in \eqref{mean} and \eqref{SD}, respectively and $N$ is a number of data samples.
\end{lemma}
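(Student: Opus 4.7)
The plan is to adapt the standard deterministic error bound for kernel ridge regression, which coincides with the GP posterior mean \eqref{mean} once $\sigma_f^2$ is read as a regularization parameter. Since $f_i$ lies in the RKHS with $\|f_i\|_{k_i}\leq B_i$ by Assumption~\ref{A2} combined with Lemma~\ref{lemma_4}, the idea is to control the residual $f_i-\mu_i$ in a suitably augmented Hilbert norm and then push the bound down to a pointwise one via the reproducing property of $k_i$.

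First, I would invoke the representer theorem to express $\mu_i=\sum_{j=1}^N\alpha_j k_i(\cdot,x^{(j)})$ with $\alpha=(K_i+\sigma_f^2\bm I_N)^{-1}y_i$, which is exactly the form of the GP mean in \eqref{mean}. Next, working in the noise-augmented Hilbert space $\tilde H$ that stacks $H_{k_i}$ with $\R^N$ under the weighting $\sigma_f^{-2}\|\cdot\|^2$ on the sample coordinates, a direct Cauchy--Schwarz argument together with the reproducing identity yields the pointwise power-function inequality
\begin{equation*}
|f_i(x)-\mu_i(x)|\leq \sigma_i(x)\,\|f_i-\mu_i\|_{\tilde H},
\end{equation*}
where the posterior variance $\sigma_i^2(x)$ from \eqref{SD} arises as the squared orthogonal distance in $\tilde H$ from $k_i(\cdot,x)$ to the span of the sample kernel sections. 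Expanding the squared residual and using $\|f_i\|_{\tilde H}^2\leq\|f_i\|_{k_i}^2\leq B_i^2$ together with the identity $\|\mu_i\|_{\tilde H}^2=y_i^T(K_i+\sigma_f^2\bm I_N)^{-1}y_i$ accounts for the first two terms appearing in $\tilde\beta_i^2$.

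The main obstacle is justifying the additive $N$ deterministically. Because each $y_i^{(j)}=f_i(x^{(j)})+w^{(j)}$ is corrupted by noise, the embedding of $f_i$ into $\tilde H$ picks up an extra contribution $\|w\|^2/\sigma_f^2$ on the noise coordinates; to convert this realization-dependent quantity into the constant $N$, I would use Assumption~\ref{A3} and align the regularization parameter with the noise scale (matching $\sigma_f$ and $\rho_f$), so that each noise coordinate contributes at most unity in the worst case and the sum is bounded by $N$. Collecting the three contributions gives $\|f_i-\mu_i\|_{\tilde H}^2\leq B_i^2 - y_i^T(K_i+\sigma_f^2\bm I_N)^{-1}y_i + N$, from which \eqref{bounnds_1} follows for every $x\in X$ with $\tilde\beta_i$ as claimed.
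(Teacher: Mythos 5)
Your overall route is the right one: the paper itself omits the proof and defers to \cite[Lemma 2]{adnanesymbolic}, and that result is proved by exactly the argument you sketch --- identify $\mu_i$ with the regularized kernel interpolant via the representer theorem, work in the noise-augmented Hilbert space, obtain the pointwise power-function inequality $|f_i(x)-\mu_i(x)|\leq\sigma_i(x)\|f_i-\mu_i\|_{\tilde H}$ by Cauchy--Schwarz and the reproducing property, and expand the residual norm to produce $B_i^2$, the term $-y_i^T(K_i+\sigma_f^2\bm I_N)^{-1}y_i$, and the noise contribution. Up to that point your reconstruction is faithful to the intended proof.

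The gap is precisely the step you flag as ``the main obstacle,'' and your resolution of it does not work. The additive $N$ in $\tilde\beta_i^2$ comes from bounding the noise coordinate $\sigma_f^{-2}\sum_{j=1}^N (w^{(j)})^2$ by $N$, which requires $|w^{(j)}|\leq\sigma_f$ for each $j$ (or at least $\|w\|^2\leq N\sigma_f^2$). Assumption~\ref{A3} gives Gaussian noise $w\sim\mathcal N(0,\rho_f^2\bm I)$, which is unbounded: each $(w^{(j)})^2/\rho_f^2$ is chi-squared with one degree of freedom, so $\sigma_f^{-2}\|w\|^2$ has mean $N$ and exceeds $N$ with probability roughly one half, no matter how you ``align'' $\sigma_f$ with $\rho_f$. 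There is no ``worst case'' in which each coordinate contributes at most unity, so the claimed deterministic (probability-one) bound does not follow from your argument under the stated hypotheses. To close the gap you must either (i) replace Assumption~\ref{A3} by a bounded-noise assumption $|w^{(j)}|\leq\sigma_f$, which is the hypothesis actually used in the deterministic error-bound literature the cited lemma draws on, or (ii) keep Gaussian noise and replace $N$ by a chi-squared tail quantile, which degrades the conclusion from a deterministic statement to a high-probability one. This is a defect inherited from the lemma statement itself, but your proof as written papers over it rather than resolving it.
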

\begin{proof}
The proof is similar to that of \cite[Lemma 2]{adnanesymbolic} and is omitted here.
\end{proof}
Note that the bound obtained in \eqref{bounnds_1} are very conservative. For more discussion, please refer to the case study in Section \ref{case study}.

\section{REACHABILITY USING FUNNEL-BASED CONTROL}\label{Sec3}


In this section, we propose the use of funnel-based control approach \cite{bechlioulis2014low} to solve Problem \ref{prob1}. Consider a funnel representing time-varying bounds for the trajectory $x_i, i\in\{1,\ldots,n\}$ given as follows
\begin{align}\label{funnel}
-c_i\rho_i(t)<x_i(t)-\eta_i<d_i\rho_i(t)
\end{align}
for all $t\in\R_0^+$, where $\rho_i:\R_0^+\ra\R^+$, $i\in\{1,\ldots,n\}$ are positive, smooth, and strictly decreasing funnel functions, $c_i,d_i\in\R_0^+$ and $\eta_i\in\R$ are some constants. In this work, we consider the following form of funnel function 
\begin{align}
\label{funnel2}
\rho_i(t)=\rho_{i0}\e^{-\epsilon_i t}+\rho_{i\infty},
\end{align} 
where $\rho_{i0}$, $\rho_{i\infty}$, $\epsilon_i\in\R^+$ are positive constants and $\rho_{i\infty}=\lim_{t\ra\infty}\rho_i(t)$. Now, by normalizing $ x_i(t)-\eta_i$ with respect to the performance function $\rho_i(t)$, the modulating error is defined as $\hat x_i(t):=\frac{x_i(t)-\eta_i}{\rho_i(t)}$ and the corresponding performance region $\hat{\mathcal{D}}_i:=\{\hat x_i\mid\hat x_i\in(-c_i,d_i)\}$. Then, we transform the modulated error through a strictly increasing transformation function $T_i:\hat{\mathcal{D}}_i\ra\R$ such that $T_i(0)=0$ and is chosen as
\begin{align}\label{transformation_fun}
T_i(\hat x_i)=\ln \Big(\frac{d_i(c_i+\hat x_i)}{c_i(d_i-\hat x_i)}\Big).
\end{align}
The transformed error is then defined as $\xi_i(x_i(t),\rho_i(t)):=T_i(\hat x_i)$. It can be verified that if the transformed error is bounded, then the  modulated error $\hat x_i$ is constrained within the region $\hat{\mathcal{D}}_i$. This also implies that $x_i(t)-\eta_i$ evolves within the bounds given in \eqref{funnel}. Differentiating $\xi_i$ with respect to time, we obtain transformed error dynamics for $i$th dimension as
\begin{align}
\dot{\xi}_i=\phi_i(\hat x_i,t)[\dot{x}_i+\alpha_i(t)(x_i-\eta_i)],
\end{align}
where $\phi_i(\hat x_i,t):=\frac{1}{\rho_i(t)}\frac{\partial T_i(\hat x_i)}{\partial \hat{x}_i}>0$ for all $\hat x_i\in(-c_i,d_i)$ and $\alpha_i(t):=-\frac{\dot{\rho}_i(t)}{\rho_i(t)}>0$ for all $t\in\R_0^+$ are the normalized Jacobian of the transformation function $T_i$ and the normalized derivative of the performance function $\rho_i$, respectively. Now, by stacking all the transformed error dynamics, one gets  
\begin{align}
\dot{\xi}=\Phi_t(\dot{x}+\alpha_t(x-\eta)),
\end{align}
where $\xi=[\xi_1,\ldots,\xi_n]^T$, $\Phi_t=diag\{\phi_1(\hat{x}_1,t),\ldots,\phi_n(\hat{x}_n,t)\}$, $\alpha_t=diag\{\alpha_1(t),\ldots$, $\alpha_n(t)\}$, and $\eta=[\eta_1,\ldots, \eta_n]^T$.
The following theorem provides the result for enforcing reachability specification by utilizing the funnel approach. 

\begin{theorem}\label{thm1}
	Consider the system $\mathcal{S}$, the learned GP approximation with mean $\mu$ \eqref{eq:mu_gp} and standard deviation $\sigma$ \eqref{eq:rho_gp}, sets $X_a,X_b\subset X$, $\Xi_i:=[\underline X_{ai},\overline X_{ai}]\cap[\underline X_{bi},\overline X_{bi}]$, $\underline{X}_i:=\min\{\underline X_{ai},\underline X_{bi}\}$, $\overline X_{i}:=\max\{\overline X_{ai},\overline X_{bi}\}$, an arbitrarily chosen state $\eta=[\eta_1,\eta_2,\ldots,\eta_n]^T\in \text{Int}(X_b)$ satisfying 
	\begin{align*}
	\eta_i\in\left\{\begin{matrix}
	\Xi_i & \text{if } \Xi_i\neq\emptyset\\ 
	[\underline X_{bi},\overline X_{bi}] & \text{if } \Xi_i=\emptyset,
	\end{matrix}\right.
	\end{align*}
	$ i\in\{1,\ldots,n\}$, and funnel function {(\ref{funnel2})} with $\epsilon_i\in\R^+$,
	\begin{align*}
	\rho_{i0}=\left\{\begin{matrix}
	\max\{|\eta_i-\underline X_{ai}|,|\eta_i-\overline X_{ai}|\} & \text{if } \Xi_i \neq\emptyset\\ 
	\max\{|\eta_i-\underline{X}_i|,|\eta_i-\overline X_{i}|\} & \text{if } \Xi_i=\emptyset,
	\end{matrix}\right.
	\end{align*}
	constants $c_i, d_i$ as follows:
	\begin{align*}
	&c_i=\frac{|\eta_i-\underline X_{ai}|}{\rho_{i0}}, d_i=\frac{|\eta_i-\overline X_{ai}|}{\rho_{i0}}, & \text{if} \quad \Xi_i\neq\emptyset ; \\
	&c_i=\frac{|\eta_i-\underline{X}_i|}{\rho_{i0}}, d_i=\frac{|\eta_i-\overline X_{i}|}{\rho_{i0}},
	& \text{if} \quad \Xi_i=\emptyset;
	\end{align*}
	and $\rho_{i\infty}$ is such that $\prod\limits_{i\in\{1,\ldots,n\}}\eta_i+[-c_i\rho_{i\infty},d_i\rho_{i\infty}] \subset X_b$\footnote{One can choose $\rho_{i\infty}$ arbitrary small in order to satisfy this condition}. \\
	Then under time-varying control law:
	\begin{align}\label{contr}
    u(x,\rho) =& -g(x)^T(g(x)g(x)^T)^{-1}(\mu(x) \nonumber\\&+ (\mathsf{sign}(x-\eta))^{T}\beta\sigma(x)+\xi(x,\rho) 
	+\ol\epsilon (x-\eta)),	
	\end{align}
	where\\ $\xi(x,\hspace{-.1em}\rho)\hspace{-.2em}=\hspace{-.2em}[\xi_1(x_1,\hspace{-.1em}\rho_1\hspace{-.1em}),\ldots,\xi_n(x_n,\hspace{-.2em}\rho_n\hspace{-.1em})]^T$
	$\hspace{-.3em}:=\hspace{-.3em}\Bigg[\hspace{-.2em}\ln\hspace{-.1em} \Big(\hspace{-.1em}\frac{d_1\big(\hspace{-.1em}c_1+\frac{x_1-\eta_1}{\rho_1}\hspace{-.1em}\big)}{c_1\big(\hspace{-.1em}d_1-\frac{x_1-\eta_1}{\rho_1}\hspace{-.1em}\big)}\hspace{-.1em}\Big),\ldots, \ln \hspace{-.1em} \Big(\hspace{-.1em}\frac{d_n\big(\hspace{-.1em}c_n+\frac{x_n-\eta_n}{\rho_n}\hspace{-.1em}\big)}{c_n\big(\hspace{-.1em}d_n-\frac{x_n-\eta_n}{\rho_n}\hspace{-.1em}\big)}\hspace{-.1em}\Big)\hspace{-.2em}\Bigg]^T$ is a transformation error as discussed above, $\ol\epsilon:=\max_{i\in\{1,\ldots,n\}}\epsilon_i$, $\mathsf{sign}(x-\eta) = [\mathsf{sign}(x_1-\eta_1),\dots,\mathsf{sign}(x_n-\eta_n)]^T,\beta=diag\{\beta_1,\dots,\beta_n\}$, one can ensure that $\exists t\in\R_0^+$ such that $x_{x_0u}(t)\cap X_b\neq\emptyset$ for all $x_0\in X_a$ with probability $(1 - \epsilon)^n$. In other words, the trajectory starting from any initial point in $X_a$, will reach $X_b$ in a finite time under the control law \eqref{contr} with a minimum probability of $(1-\epsilon)^{n}$. 
\end{theorem}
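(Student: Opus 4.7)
The plan is to decompose the argument into a probabilistic step (invoking Proposition~\ref{lemma1}) and a deterministic funnel-control analysis conducted on the event that the GP bound holds. On that event of probability at least $(1-\varepsilon)^n$, the trajectory is driven by $\dot x_i = f_i(x) + (g(x)u)_i$ with $u$ given by \eqref{contr}. Substituting \eqref{contr} into the dynamics yields, for each component, the closed-loop equation
\begin{align*}
\dot x_i = \bigl(f_i(x)-\mu_i(x)-\mathsf{sign}(x_i-\eta_i)\beta_i\sigma_i(x)\bigr)-\xi_i-\ol\epsilon\,(x_i-\eta_i),
\end{align*}
and therefore
\begin{align*}
\dot\xi_i=\phi_i(\hat x_i,t)\bigl[\Delta_i-\xi_i+(\alpha_i(t)-\ol\epsilon)(x_i-\eta_i)\bigr],
\end{align*}
where $\Delta_i:=f_i(x)-\mu_i(x)-\mathsf{sign}(x_i-\eta_i)\beta_i\sigma_i(x)$. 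The first step is to observe that on the GP event one has $\mathsf{sign}(x_i-\eta_i)\Delta_i\le 0$, and from the explicit form \eqref{funnel2} that $\alpha_i(t)=\tfrac{\epsilon_i\rho_{i0}e^{-\epsilon_i t}}{\rho_{i0}e^{-\epsilon_i t}+\rho_{i\infty}}\in(0,\epsilon_i]\le\ol\epsilon$, so $(\alpha_i(t)-\ol\epsilon)(x_i-\eta_i)$ has the opposite sign to $\xi_i$.

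The second step is to set up the initial conditions. The prescribed $\rho_{i0}$, $c_i$, and $d_i$ are chosen precisely so that for every $x_0\in X_a$ one has $\eta_i-c_i\rho_{i0}\le x_{0,i}\le \eta_i+d_i\rho_{i0}$, which places $\hat x_i(0)\in(-c_i,d_i)$ and hence $\xi_i(0)$ finite. By the standard local existence argument there is a maximal interval $[0,\tau_{\max})$ on which $\hat x_i(t)\in(-c_i,d_i)$ for all $i$ and the transformed error $\xi$ is well-defined. I would then carry out the Lyapunov analysis with $V=\tfrac12\xi^\top\xi$: using $\phi_i>0$ together with the two sign observations above, every term in $\xi^\top\dot\xi$ is nonpositive and
\begin{align*}
\dot V\le -\sum_{i=1}^n \phi_i(\hat x_i,t)\,\xi_i^2\le 0 \qquad \text{on } [0,\tau_{\max}).
\end{align*}
Hence $\xi$ is bounded on $[0,\tau_{\max})$, which via the transformation $T_i$ prevents $\hat x_i$ from approaching $\pm c_i$ or $\pm d_i$; a contradiction argument then forces $\tau_{\max}=\infty$, establishing invariance of the funnel \eqref{funnel} for all time.

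The third step is finite-time reachability of $X_b$. Since the trajectory stays in a compact sub-funnel strictly inside $(-c_i,d_i)$, $\phi_i(\hat x_i,t)$ is bounded below by a positive constant, so $\dot V\le -\lambda V$ for some $\lambda>0$, giving $\xi(t)\to 0$ and therefore $\hat x_i(t)\to 0$, i.e. $x_i(t)\to\eta_i$. Because $\eta\in\text{Int}(X_b)$, there exists a finite $t^\star$ with $x_{x_0u}(t^\star)\in X_b$. Conditioning on the complement and using the probability of the GP event from Proposition~\ref{lemma1} yields the claimed lower bound $(1-\varepsilon)^n$.

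The main obstacle I anticipate is the Lyapunov step: one must simultaneously exploit (i) the coordinate-wise GP bound to sign-dominate $\Delta_i$ by $-\xi_i$ through $\mathsf{sign}(x_i-\eta_i)=\mathsf{sign}(\xi_i)$, (ii) the explicit form of $\rho_i$ to get $\alpha_i(t)\le\ol\epsilon$ so that the surplus term also carries the correct sign, and (iii) a uniform lower bound on $\phi_i$ on the invariant sub-funnel to upgrade $\dot V\le 0$ into exponential decay. The remaining steps (initial-condition placement, maximal-interval continuation, and concluding finite-time entry into $X_b$ from $\eta\in\text{Int}(X_b)$) are routine once these three ingredients are in place.
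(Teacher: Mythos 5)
Your proposal is correct and follows essentially the same route as the paper: the identical Lyapunov function $V=\tfrac12\xi^{T}\xi$, the same sign-based case split matching $\mathsf{sign}(\xi_i)=\mathsf{sign}(x_i-\eta_i)$ against the GP confidence band to kill the $\Delta_i$ term, and the same bound $\alpha_i(t)\le\ol\epsilon$ to dispose of the surplus term, yielding $\dot V\le-\xi^{T}\Phi_t\xi$ and funnel invariance on the high-probability event. The only (harmless) divergence is at the very end, where you upgrade to exponential decay of $\xi$ and convergence of $x$ to $\eta\in\text{Int}(X_b)$, while the paper simply observes that the funnel itself contracts into $X_b$ so the confined trajectory must enter it; your version adds the maximal-interval continuation argument that the paper leaves implicit.
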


\begin{proof}
{To improve readability, we will drop the arguments $x$ and $\rho$ of the map $\xi$.}
	Consider Lyapunov like function $V=\frac{1}{2}\xi^T\xi$ and
	\begin{align}
	\dot{V}=&\xi^T\Phi_t(f(x)+g(x)u+\alpha_t(x-\eta))\nonumber\\
	=&\xi^T\Phi_t(f(x)-g(x)g(x)^T(g(x)g(x)^T)^{-1}(\mu(x) \nonumber\\
	&+(\mathsf{sign}(x-\eta))^{T}\beta\sigma(x)+\xi+\ol\epsilon (x-\eta))+\alpha_t(x-\eta))\nonumber\\
	=&-\xi^T\Phi_t(\mu(x) + (\mathsf{sign}(x-\eta))^{T}\beta\sigma(x)-f(x))\nonumber\\  & -\xi^T\Phi_t\xi- \ol\epsilon\xi^T\Phi_t(x-\eta)  +\xi^T\Phi_t\alpha_t(x-\eta).\label{xyz}
	\end{align}
	Considering the construction of transformed error $\xi$ and \eqref{aaa}, one can obtain that the first term of last equality is always non-positive with a probability greater than $(1-\epsilon)^n$. To elaborate more, for an $i\in\{1,\ldots,n\}$, we consider the following two cases:\\
	\textit{Case I:} $\xi_i<0$ implies that $(x_i-\eta_i)<0$ (this is due to $\xi_i(\hat x_i)$ is strictly increasing and $\xi_i(0)=0$). It follows that
	\begin{align*}
	    -\xi_i\phi_i&(\hat{x}_i,t)(\mu_i(x)+\mathsf{sign}(x_i-\eta_i)\beta_i\sigma_i(x)-f{_i}(x))\\&=-\xi_i\phi_i(\hat{x}_i,t)(\mu_i(x)-\beta_i\sigma_i(x)-f{_i}(x))\leq0.
	\end{align*}  
	The last inequality is due to $\xi_i<0$, $\phi_i(\hat x_i,t)>0$, and $\mu_i(x)-\beta_i\sigma_i(x)-f_i(x)\leq0$.\\
	\textit{Case II:} $\xi_i\geq0$ implies that $(x_i-\eta_i)\geq 0$. It follows that
	\begin{align*}
	    -\xi_i\phi_i&(\hat{x}_i,t)(\mu_i(x)+\mathsf{sign}(x_i-\eta_i)\beta_i\sigma_i(x)-f_i(x))\\&=-\xi_i\phi_i(\hat{x}_i,t)(\mu_i(x)+\beta_i\sigma_i(x)-f_i(x))\leq0.
	\end{align*} 
	The last inequality is due to $\xi_i>0$, $\phi_i(\hat x_i,t)>0$, and $\mu_i(x)+\beta_i\sigma_i(x)-f_i(x)\geq0$.
	This implies that the first term of 
	\eqref{xyz} is non-positive with probability of at least $(1-\epsilon)^n$.
	
	Next, following the facts that $\Phi_t$ and $\alpha_t$ are positive definite matrices, $\alpha_t<\ol\epsilon:=\max_{i\in\{1,\ldots,n\}}\epsilon_i$, $\xi^T(x-\eta)\geq0$ (this is due to $\xi_i(\hat x_i)$ is strictly increasing and $\xi_i(0)=0$), one obtains $\dot{V}\leq-\xi^T\Phi_t\xi$. This implies that $\xi(t)$ is bounded for all $t\in\R_0^+$ and hence we guarantee \eqref{funnel} that is $-c_i\rho_i(t)+\eta_i<x_i(t)<d_i\rho_i(t)+\eta_i $ with probability of at least $(1-\epsilon)^n$. From the choice of $\eta$ and constants $\rho_{i0}$, $\rho_{i\infty}$, $c_i$, $d_i$, $\eta_i$ for all $i\in\{1,\ldots,n\}$, one can readily ensure that $X_a\subseteq\prod\limits_{i\in\{1,\ldots,n\}}[-c_i\rho_i(0)+\eta_i,-d_i\rho_i(0)+\eta_i]$ and as $\lim\limits_{t\ra\infty}\prod\limits_{i\in\{1,\ldots,n\}}[-c_i\rho_i(t)+\eta_i,-d_i\rho_i(t)+\eta_i]=\prod\limits_{i\in\{1,\ldots,n\}}\eta_i+[-c_i\rho_{i\infty},d_i\rho_{i\infty}] \subset X_b$. This implies that there exist $ t\in\R_0^+$ such that $x_{x_0u}(t)\cap X_b\neq\emptyset$ for all $x_0\in X_a$ with probability of at least $(1-\epsilon)^n$. This concludes the proof.
\end{proof}
\section{CASE STUDY}\label{case study}
Here, we demonstrates the efficacy of the proposed result using a numerical example adapted from \cite{umlauft2018uncertainty}.
\begin{figure*}[t] 
 	\centering
 	\subfigure[The original $f(x)$]
 	{\includegraphics[scale=0.55]{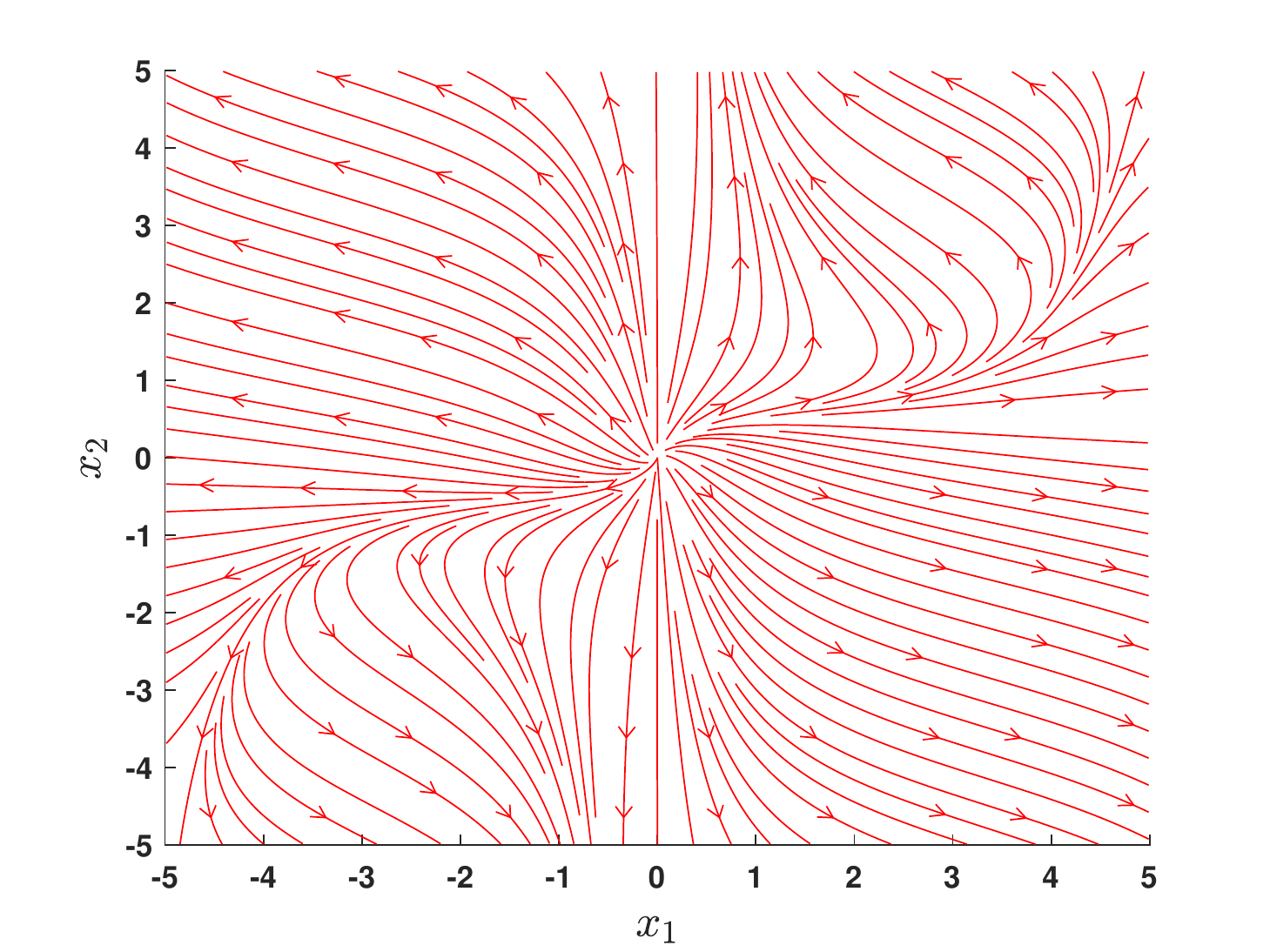}}
 	\subfigure[GP approximation of $f(x)$]
 	{\includegraphics[scale=0.55]{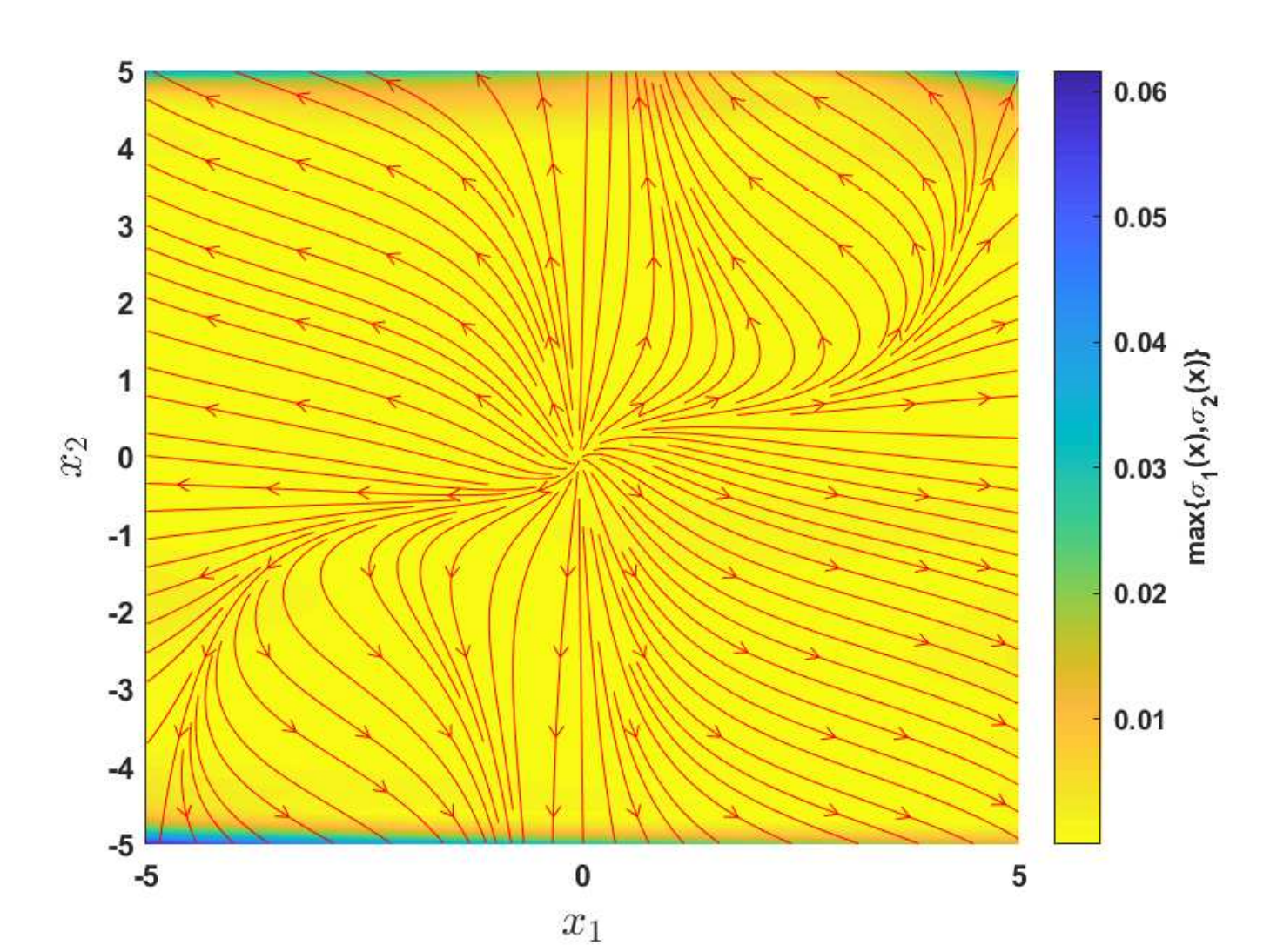}} 
 	\vspace{1em}
 	\caption{The GP approximation for considered example. the color-map shows maximum of standard deviations.}
 	\label{fig:learned_model}
\end{figure*}
\begin{align*}
f(x)=\begin{bmatrix}
f_1(x)\\ 
f_2(x)
\end{bmatrix}=\begin{bmatrix}
x_1 + (\cos(x_1) - 1)x_2 \\ 
-s(x_1) + x_2
\end{bmatrix}\hspace{-0.2em}, g(x)=\begin{bmatrix}
1 & 0 \\
0 & 1
\end{bmatrix}\hspace{-0.2em},
\end{align*}
where $s(x_1) = \frac{1}{1+exp(-2x_1)} - 0.5$ is the shifted sigmoid function. We consider a compact state-space $X=[-5,5]\times[-5,5]$, initial state-set $X_a=[-2,-3]\times[-2,-3]$, and the goal set $X_b=[1,3]\times[1,3].$ The functions $f_1(\cdot)$ and $f_2(\cdot)$ are continuous, has therefore a finite RKHS norm under the squared exponential kernel on a compact set and complies to Assumption \ref{A2}.  

For solving Problem \ref{prob1}, we first approximated the unknown dynamics using GPs 
with $50$ collected measurments of $x$ and corresponding $y$'s, $y=f(x)+w$, where $w \sim \mathcal{N}(0,\sigma_f^2\bm I_2)$, $\sigma_f=0.01$, by running the simulated system with different start states. We used exponential quadratic kernel  \cite{williams2006gaussian}
defined as 
$k_i(x,x')=\sigma_{k_i}^2\exp\Big(\sum_{j=1}^2\frac{(x_i-x_j')^2}{-2l_{ij}^2}\Big), i\in\{1,2\}$, where $\sigma_{k_1}=316$ and $\sigma_{k_2}=25.3$ are signal variances and $l_{11}=2.9$, $l_{12}=177$, $l_{21}=1.67$, and $l_{22}=50.5$ are length scales. We use Limited Memory Broyden–Fletcher–Goldfarb–Shanno (L-BFGS-B) algorithm \cite{doi:10.1137/0916069,10.1145/279232.279236} to obtain these parameters. The inferred mean and variance are as in \eqref{eq:mu_gp} and \eqref{eq:rho_gp} with $\overline{\sigma}_{\max}=\max\{\overline{\sigma}_1,\overline{\sigma}_2\} =0.0616 , \overline{\sigma}_{1} = 0.022 , \overline{\sigma}_{2} = 0.0616.$ 
Figure \ref{fig:learned_model} depicts the original and the approximated map $f(x)$.


Computing $\|f_i\|_{k_i}$ and $\gamma_j$, $i\in\{1,2\}$, is intractable in general. Thus, we used Monte-Carlo method to get the probability bound for the confidence interval given in Proposition \ref{lemma1}. 

 For a fixed value of $\beta_i\overline\sigma_i=$ $0.04$, $i=1,2$, we get a probability interval for the probability in \eqref{aaa} as $\mathbb{P}\Big\{\{\mu(x) -\beta \sigma(x)\leq f(x)\leq\mu(x) +\beta \sigma(x)\}, \forall x\in X\Big\}\in[0.9894,0.9907]$ with confidence $1-10^{-10}$ using $10^6$ realizations.
Thus, one can choose the lower bound $(1-\epsilon)^2$ as $0.9894$.

We also computed the value of $\tilde\beta_1=7.0878$ and $\tilde\beta_2=7.0710$ as shown in Lemma \ref{lemma_5}. To compare the conservativeness of the bounds, we compare the value of $\tilde\beta_i\ol\sigma_i$ with Monte-Carlo approach to obtain probability of 1 with confidence of $1-10^{-10}$. Using Monte-Carlo approach, the obtained values are $\tilde\beta_1\ol\sigma_1=0.016$ and $\tilde\beta_2\ol\sigma_2=0.0442$ and the values obtained using results of Lemma \ref{lemma_5} are $\tilde\beta_1\ol\sigma_1=0.1559$ and $\tilde\beta_2\ol\sigma_2=0.4366$, respectively. One can readily see the conservatism in the bounds obtained using results of Lemma \ref{lemma_5}.

\begin{figure}[H] 
 	\centering
  	\includegraphics[scale=0.45]{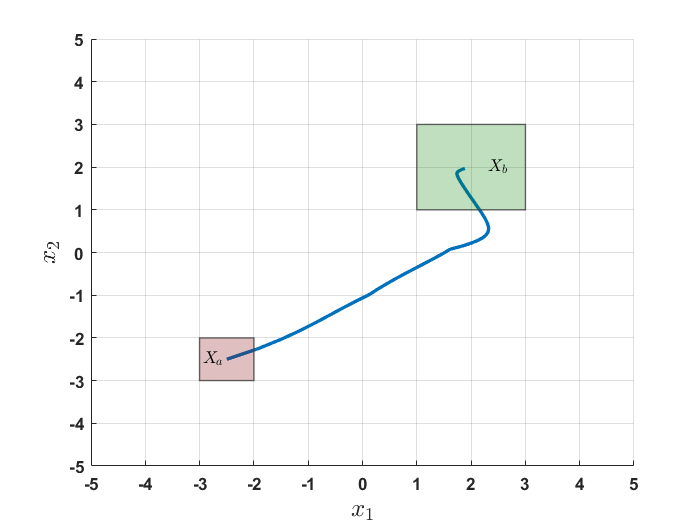}
    \caption{Simulation of the proposed funnel-based controller using the learned GPs, $X_a$ and $X_b$ are initial and goal sets, respectively. blue line indicates the state trajectory.}
 	\label{fig:Controller}
\end{figure}
With the help of learned mean and variance, we simulate the results using the proposed control law \eqref{contr}. The parameters of the controllers and construction of corresponding funnel functions are as per the Theorem \ref{thm1}.
\begin{figure*}[t] 
 	\centering
    \vspace{1em}
  	\subfigure[]
 	{\includegraphics[scale=0.55]{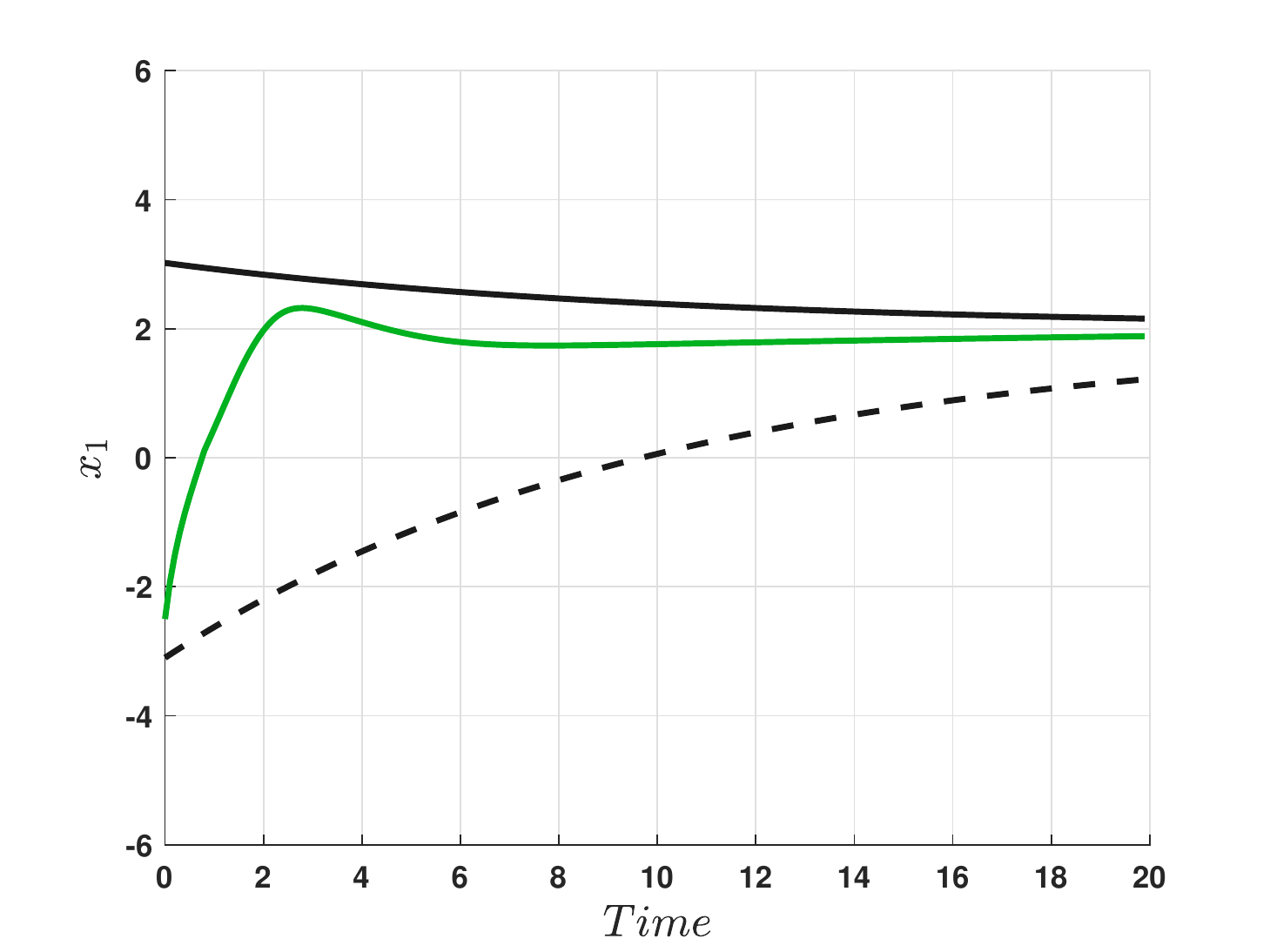}}
 	\vspace{1em}
 	\subfigure[]
 	{\includegraphics[scale=0.55]{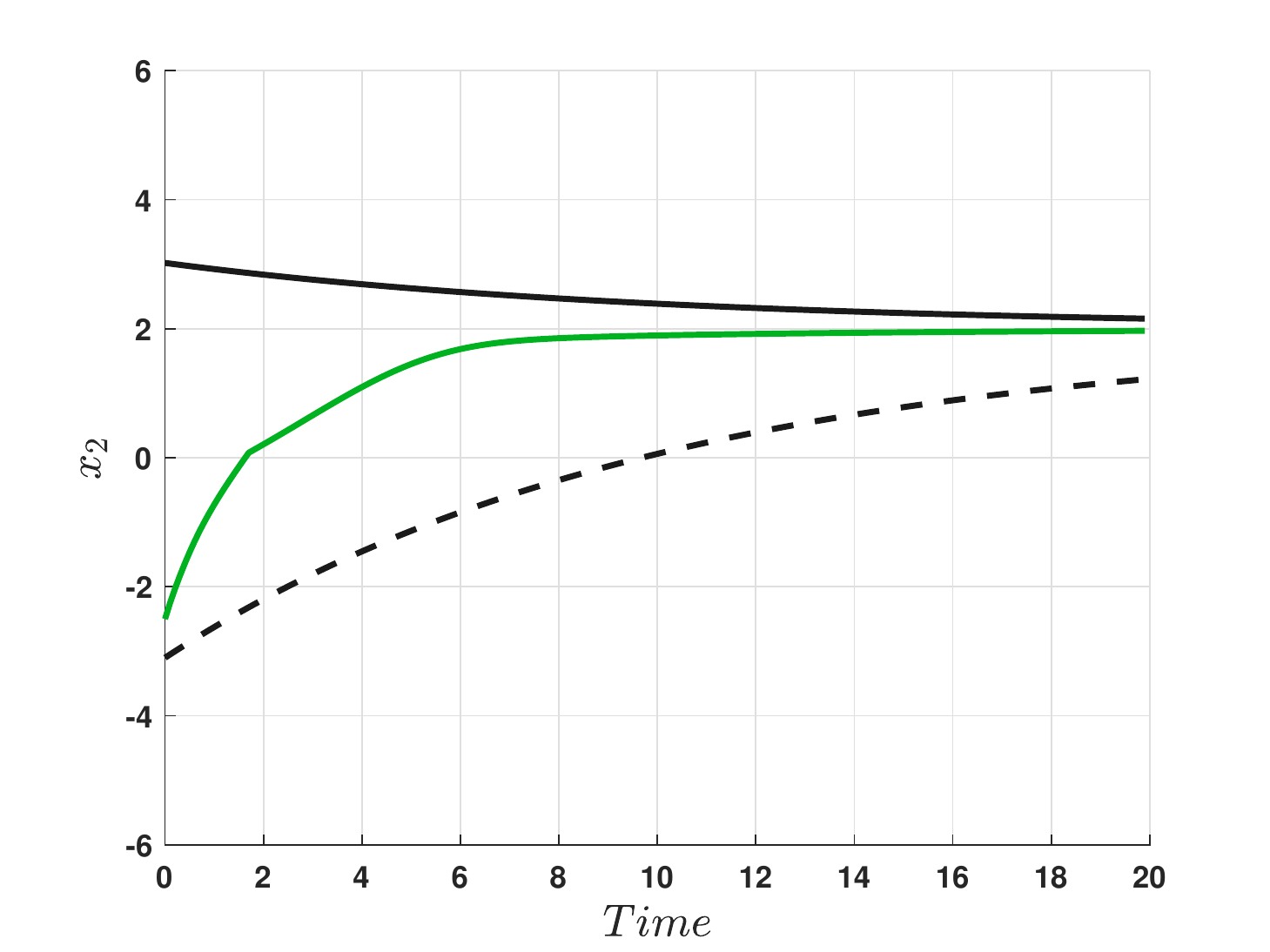}} 
 	\caption{Illustration of trajectories $x_1$ and $x_2$. The green solid lines represent trajectories, black solid and dashed lines represent upper and lower bounds of funnels, respectively.}
 	\label{fig:bounds}
\end{figure*}

The trajectory of the system reaching $X_b$ from $X_a$ is shown in Figure \ref{fig:Controller}. One can readily see from Figure \ref{fig:bounds}, the trajectories $x_1$ and $x_2$ satisfy the constructed funnel bounds.




\section{CONCLUSION AND FUTURE WORK}\label{conclusion}
The work proposed a scheme for designing closed-form controller for unknown nonlinear control systems enforcing reachability specifications. We provide a control policy using a funnel-based approach by approximating unknown system dynamics using Gaussian processes. We verified the proposed method using a numerical example. Future research includes incorporating constraints on the input space and extending the results to more complex specifications like linear/signal temporal logic specifications.


\bibliographystyle{IEEEtran}
\bibliography{bibliography}

\addtolength{\textheight}{-12cm}   

\end{document}